\newtheorem{defi}{Definition}
\newtheorem{teo}{Theorem}
\newtheorem{prop}[teo]{Proposition}
\newtheorem{cor}[teo]{Corollary}
\newtheorem{lem}[teo]{Lemma}
\newtheorem{rmk}{Remark }
\newtheorem{exm}{Example}
\begin{document}
\title{Extensions of Hamiltonian systems dependent on a rational parameter}
\author{Claudia M. Chanu,  Luca Degiovanni, Giovanni Rastelli \\ \\
Dipartimento di Matematica G.\ Peano,\\ Universit\`a di Torino.   Torino, via Carlo Alberto 10, Italia.\\ \\ e-mail: claudiamaria.chanu@unito.it \\ luca.degiovanni@gmail.com \\ giorast.giorast@alice.it }
\maketitle

\begin{abstract}
The technique of ``extension" allows to build $(n+1)$-dimensional Hamiltonian systems with a non-trivial polynomial in the momenta first integral of any given degree starting from a 
$n$-dimensional Hamiltonian satisfying some additional properties.
Until now, the application of the method was restricted to integer values of a certain fundamental parameter determining the degree of the additional first integral. In this article we show how this technique can be generalized to any rational value of the same parameter. Several examples are given, among them the anisotropic oscillator and a special case of the Tremblay-Turbiner-Winternitz system.
\end{abstract}  

\section{Introduction}

The ``extensions" of  natural Hamiltonians  were introduced in \cite{CDRfi} as a tool for building Hamiltonian systems with polynomial first integrals of any given degree. The technique of extension was developed  as the generalization of an iterative procedure to generate a first integral of degree $\lambda\in \mathbb{N}-\{0\}$ in the momenta \cite{Pol} for
the system
\begin{equation} \label{cal0}
\frac 12 p_r^2+\frac 1{r^2}\left(\frac 12p_\psi^2+\frac a{\sin^2(\lambda\psi)}\right),
\end{equation}
 This superintegrable system was already considered in \cite{CDR0}, where it was remarked the existence of additional polynomial first integrals for several values of $\lambda\in \mathbb Q$ and a general form for odd positive integers was conjectured. Then, a generalization of the system was introduced by Tremblay, Turbiner and Winternitz (TTW) \cite{TTW} and, subsequently, studied by many authors (see for instance \cite{KKM,MPY,MPW} and references therein), proving the superintegrability of the TTW system, and of several related new systems, for any rational value of $\lambda$.

The technique of extensions was generalized in \cite{CDR}
and employed to build new superintegrable systems from existing ones in \cite{CDR1}.
The extension theory can be easily generalized to complex manifolds, see [7]
for details.

In this paper we  develop a more general iterative procedure that in particular allows us to construct, for any positive rational $\lambda=m/n$, a
polynomial first integral of degree $m+n-1$  of the system (\ref{cal0}), as well as of the two uncoupled harmonic oscillators with rational ratio $\lambda$ of the frequencies
\begin{equation}\label{osc0}
\frac{1}{2}p_y^2+\frac{1}{2}p_x^2+\omega^2x^2+\frac{\omega^2}{\lambda^2}y^2.
\end{equation}
For both systems, the case of integer $\lambda$ is recalled in Sec.\ 2, where
the extension technique is briefly summarized.  Its generalization is exposed in Sec.\ 3, while
Sec.\ 4 contains several detailed examples. In the Appendix, the trigonometric tagged functions, employed extensively in the article, are defined and some of their properties are shown.

\section{Extension of Hamiltonian systems}

The extension procedure has been characterized in \cite{CDRfi,CDR}.
Given a Hamiltonian $L$ on a $2n$-dimensional Poisson manifold $Q$, we construct the
$(2n+2)$-dimensional Poisson manifold $M=T\times Q$, where $T$ is the cotangent bundle of a one-dimensional manifold with the canonical symplectic form $\mathrm{d}p_u\wedge \mathrm{d}u$.

The main result (Proposition 1 in \cite{CDR}) states that a Hamiltonian $L$ on the Poisson manifold $Q$, with Hamiltonian vector field $X_L$, admits an extension of the form
$$
\frac{1}{2}p_u^2+\alpha(u)L+\beta(u),
$$
with a polynomial first integral of the form $U^m(G),$ where
$$
U=p_u+\gamma(u)X_L,
$$
if and only if there exists a function $G$ on $Q$  that satisfies the relation
\begin{equation}\label{CG_old}
X_L^2(G)=-2m(cL+L_0)G,
\end{equation}
for some $m\in\mathbb{N}-\{0\}$ and $c,L_0\in\mathbb{R}$. If a solution is found for $(mc,L_0)$ with $c\neq 0$, then, without loss of generality we can set $L_0$=0.
When the relation (\ref{CG_old}) holds,  the function $\gamma$ is any solution of
\begin{equation}\label{g_old}
\gamma''+2c\gamma\gamma'=0
\end{equation}
and the functions $\alpha$ and $\beta$ are calculated directly from $\gamma$ through the relations \begin{equation}\label{albe}
\alpha=-m\gamma',\quad \beta=mL_0\gamma^2. 
\end{equation}
The form of $\gamma$, and therefore of the operator $U$, depends on the value of $c$. If $c\neq0$, then for any $\kappa\in\mathbb{R}$ a solution of (\ref{g_old}) and the corresponding operator are 
$$
\gamma=\frac{1}{T_\kappa(cu)},\qquad U=p_u+\frac{1}{T_\kappa(cu)}X_L.
$$
The choice $L_0=0$ leads to $\beta=0$,  hence, up to inessential constants, the extended Hamiltonian is
$$
\frac{1}{2}p_u^2+\frac{mc}{S^2_\kappa(cu)}L  
$$
where the trigonometric tagged functions $S_\kappa$, $T_\kappa$ are defined in the Appendix.
Conversely, if $c=0$, then for any $A\in\mathbb{R}$
$$
\gamma=-Au, \qquad U=p_u-AuX_L
$$
and the extended Hamiltonian, written  up to inessential constants, becomes
$$
\frac{1}{2}p_u^2+mAL+mL_0A^2u^2.
$$
The parameters in equation (\ref{CG_old}) and (\ref{g_old}) are chosen in order to obtain a function $\gamma$, and therefore an operator $U$, not depending on $m$. This choice, going back to \cite{CDRfi}, gives a simpler form for $\gamma$ but has a drawback: $G$ seems to depend on the three independent parameters $m$, $c$ and $L_0$ instead that on the two parameters $mc$ and $mL_0$, as in fact it is. 
By setting instead $\tilde{c}=mc$ and $\tilde{L}_0=mL_0$,  equations (\ref{CG_old}) and (\ref{g_old}) become respectively
\begin{eqnarray}
&&X_L^2(G)=-2(\tilde{c}L+\tilde{L}_0)G, \label{CG}\\
&&m\gamma''+2\tilde{c}\gamma\gamma'=0. \label{g_new}
\end{eqnarray}
Within this notation, $G$ explicitly depends  on  two parameters only and it is clear that $G$ remains the same for different values of $m$, but $\gamma$ and $U$ become dependent on $m$. A simple calculation shows that if $\gamma(u;c)$ satisfies (\ref{g_old}) then $m\gamma(u;\tilde{c})$ satisfies (\ref{g_new}); moreover,  relations (\ref{albe}) hold unchanged.

In the following, the choice of parameters made in equations (\ref{CG}) and (\ref{g_new}) will be used and therefore, for $\tilde{c}\neq0$, $\kappa\in\mathbb{R}$,
\begin{equation}\label{ga1}
\gamma=\frac{m}{T_\kappa(\tilde{c}u)},\qquad U=p_u+\frac{m}{T_\kappa(\tilde{c}u)}X_L,
\end{equation} 
and the extended Hamiltonian becomes
\begin{equation}\label{par1}
\frac{1}{2}p_u^2+\frac{m^2\tilde{c}}{S^2_\kappa(\tilde{c}u)}L . 
\end{equation}
Conversely, for $\tilde{c}=0$, $A\in\mathbb{R}$, we have
\begin{equation}\label{ga2}
\gamma=-mAu, \qquad U=p_u-mAuX_L
\end{equation}
and the extended Hamiltonian becomes
\begin{equation}\label{par2}
\frac{1}{2}p_u^2+m^2AL+m^2\tilde{L}_0A^2u^2.
\end{equation}
When $L$ is a natural Hamiltonian, and therefore $Q$ is the cotangent bundle of a Riemannian manifold, the configuration manifolds of the extended Hamiltonians are warped manifolds. 

Several explicit examples of extensions are exposed into details in \cite{Pol,CDRfi,CDR1}.
\begin{exm} \rm (See also \cite{Pol}) 
We apply the extension procedure  to the Hamiltonian
\begin{equation}\label{calL}
L=\frac 12p_\phi^2+\frac {a}{\sin^2(\phi)}.
\end{equation}
One of the possible extensions (with $\kappa=0$) is 
\begin{equation}\label{ex1}
H^{\mathrm{cal}}=\frac{1}{2}p_u^2+\frac{\lambda^2}{u^2}\left(\frac{1}{2}p_\phi^2+\frac{a}{\sin^2\phi}\right),
\end{equation}
which coincides with the Calogero-type system (\ref{cal0}) after the rescaling $\phi=\lambda\psi$.
In this example $\tilde{c}\neq0$ and $\lambda^2=m^2/\tilde{c}$. The relation (\ref{CG}) imposes a very strong constraint on the parameters: if the function $G$ does not depend on the momenta, then (\ref{CG}) has a solution only for $\tilde{c}=1$ and $\tilde{L}_0=0$. Hence, $\lambda^2=m^2$ and $\lambda$ is necessarily an integer.
\end{exm}
\begin{exm}\rm (See also \cite{CDR1})
We consider the system of two uncoupled harmonic oscillators (\ref{osc0}). 
By rescaling $u=\lambda y$ and dividing the Hamiltonian by the constant factor $\lambda^2$, we get the equivalent 
Hamiltonian
\begin{equation}\label{ex2}
H^{\mathrm{osc}}=\frac{1}{2}p_u^2+\lambda^2\left(\frac{1}{2}p_x^2+\omega^2x^2\right)+\lambda^2\omega^2u^2.
\end{equation}
Here we have an extension of the one-dimensional harmonic oscillator with $\tilde{c}=0$, $\lambda^2=m^2A$ and $\omega^2=A\tilde{L}_0$. Even in this case, the equation (\ref{CG}) admits a solution not depending on the momenta only for $\tilde{L}_0=\omega^2$. This constraint implies $A=1$ and $\lambda^2=m^2$, forcing $\lambda$ to be an integer.
\end{exm}

Both  systems of the above examples admit the maximal number of functionally independent first integrals, namely, $H$, $L$ and $U^\lambda G$, for $\lambda$ integer, so that they are superintegrable. However, it
 is well known that Hamiltonians (\ref{ex1}) and (\ref{ex2}) are superintegrable for any rational value of the parameter $\lambda$ (\cite{MPY,MPW,KKM}).  In the following we show an effective procedure to construct a third independent first integral for non integers values of $\lambda$ by means of a more general extension procedure.

\section{Extensions with rational parameters}
It is useful to slightly modify  the  expression of the extensions introduced so far and rename them as follows

\begin{defi}
The Hamiltonian
\begin{equation}\label{m_ext}
H_m=\frac{1}{2}p_u^2+m^2\tilde \alpha(u) L+m^2\tilde \beta(u), 
\end{equation}
defined on the Poisson manifold $T \times Q$, is a $m$-\textbf{extension} of $L$ generated by $G$, if
\begin{equation}\label{m_int}
K_m=(p_u+m\tilde \gamma(u)  X_L)^m(G)
\end{equation}
is a first integral of $H_m$.
\end{defi}
The comparison between (\ref{ga1})-(\ref{par2}) and (\ref{m_ext})-(\ref{m_int}) shows that $\tilde \alpha$, $\tilde \beta$, $\tilde \gamma$ are independent of $m$ and that in particular the following relations hold
$$
\tilde \alpha=\frac \alpha{m^2}, \quad \tilde \beta=\frac \beta{m^2}, \quad \tilde \gamma=\frac \gamma m,
$$ 
(see Table 1 for their expressions according to the value of $\tilde c$).
Moreover, $G$ must satisfy (\ref{CG}). 
\begin{table}[h]
\begin{center}
\begin{tabular}{|c|c|c|}
\hline
&$\tilde c=0$ & $\tilde c \neq 0$ \cr

\hline

$\tilde \alpha = -\tilde{\gamma}'= $&$A$ &  $\dfrac {\tilde c \vphantom{\frac 12} }{S_\kappa^2(\tilde c u)}$ \cr

$\tilde \beta = \tilde L_0 \tilde{\gamma}^2=$ &$\vphantom{\dfrac 12}\tilde L_0A^2u^2$ & $0$  \cr

$\tilde \gamma = $& $-Au$ & $\dfrac 1{T_\kappa(\tilde c u)}$\cr  

\hline
\end{tabular}
\end{center}
\caption{Functions involved in the $m$-extension of $L$}
\end{table}

In order to write as extensions the Hamiltonians (\ref{ex1}), (\ref{ex2}) with $\lambda=\frac mn$, one, naively, should divide $\tilde \alpha$, $\tilde \beta$, $\tilde \gamma$ by $n^2$. This corresponds, roughly speaking, to  a $m$-extension of $L/n^2$. More precisely, 

\begin{defi}\label{def2}
The Hamiltonian
\begin{equation}\label{mn_ext}
H_{m,n} = \frac{1}{2}p_u^2+\frac{m^2}{n^2}\tilde \alpha(u) L+\frac{m^2}{n^2}\tilde \beta(u), 
\end{equation}
defined on the Poisson manifold $T \times Q$,
is called a $(m,n)$-\textbf{extension} of $L$ generated by a function $G_n$ if
\begin{equation}\label{mn_int}
K_{m,n}=\left(p_u+\frac{m}{n^2}\tilde \gamma(u)   X_L\right)^m(G_n)
\end{equation}
is a first integral of $H_{m,n}$, where $\tilde \alpha$, $\tilde \beta$, $\tilde \gamma$ are defined as in Table 1.
\end{defi}

\begin{lem}\label{Uno}
A Hamiltonian $L$ admits a $(m,n)$-extension generated by a function $G_n$, if and only if the function $G_n$ satisfies
\begin{equation}\label{CGn}
X_L^2(G_n)=-2n^2(\tilde{c}L+\tilde{L}_0)G_n.
\end{equation}
\end{lem}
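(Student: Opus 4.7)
The plan is to recognize the $(m,n)$-extension of $L$ as an ordinary $m$-extension (in the sense of the previous section) of the rescaled Hamiltonian $L':=L/n^{2}$, and then to translate the already-established criterion (\ref{CG}) back in terms of $L$.

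First I would observe that, since the Poisson bracket is $\mathbb{R}$-linear, $X_{L'}=X_{L}/n^{2}$, and therefore
\[
\frac{m}{n^{2}}\tilde\gamma\, X_{L}= m\tilde\gamma\, X_{L'},\qquad
\frac{m^{2}}{n^{2}}\tilde\alpha\, L = m^{2}\tilde\alpha\, L'.
\]
In the case $\tilde c\neq 0$ we have $\tilde\beta=0$, so this already shows that the Hamiltonian (\ref{mn_ext}) and the operator in (\ref{mn_int}) coincide with those of an $m$-extension of $L'$ with the same parameters $\tilde c,\kappa$. In the case $\tilde c=0$, the same identification works provided one sets $\tilde L_{0}':=\tilde L_{0}/n^{2}$, so that the $\tilde\beta$-term $\frac{m^{2}}{n^{2}}\tilde L_{0}A^{2}u^{2}$ of $H_{m,n}$ becomes $m^{2}\tilde L_{0}'A^{2}u^{2}$, which is precisely the $\tilde\beta$-term of an $m$-extension of $L'$ with parameter $\tilde L_{0}'$.

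Having made this identification, I would invoke the criterion (\ref{CG}) applied to $L'$ and $G_{n}$, which asserts that such an $m$-extension exists if and only if
\[
X_{L'}^{2}(G_{n}) = -2(\tilde c\, L'+\tilde L_{0}')\,G_{n}.
\]
Substituting $X_{L'}^{2}=X_{L}^{2}/n^{4}$, $L'=L/n^{2}$ and $\tilde L_{0}'=\tilde L_{0}/n^{2}$ (or $\tilde L_{0}'=\tilde L_{0}=0$ when $\tilde c\neq 0$), and clearing the factor $n^{4}$, gives exactly (\ref{CGn}). The converse direction follows by reversing the same substitutions, so the equivalence holds in both directions.

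I do not anticipate a substantive obstacle: the entire content of the lemma is parameter bookkeeping against Proposition 1 of \cite{CDR}. The only point requiring care is the different way the factor $1/n^{2}$ enters the $\tilde\alpha$- and $\tilde\beta$-terms of $H_{m,n}$, which in the $\tilde c=0$ case forces a rescaling of the auxiliary parameter $\tilde L_{0}$ (but not of $\tilde c$ or $A$) when passing from $L$ to $L'$; once this is set up correctly, the two criteria are literally the same equation multiplied by $n^{4}$.
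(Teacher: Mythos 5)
Your proposal is correct and follows essentially the same route as the paper's own proof: both identify the $(m,n)$-extension of $L$ with an $m$-extension of $L^{(n)}=L/n^{2}$, rescale the auxiliary parameter to $\tilde L_{0}^{(n)}=\tilde L_{0}/n^{2}$ so that the $\tilde\beta$-term matches, and then rewrite the criterion (\ref{CG}) for $L^{(n)}$ in terms of $X_L$ and $L$, clearing the factor $n^{4}$ to obtain (\ref{CGn}).
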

\begin{proof}
The Hamiltonian $L^{(n)}=\displaystyle{\frac{1}{n^2}L}$ admits a $m$-extension if and only if there exists a function $G_n$ satisfying
$$
X_{L^{(n)}}^2(G_n) = -2\left(\tilde{c}L^{(n)}+\tilde{L}_0^{(n)}\right)G_n.
$$
The $m$-extension of $L^{(n)}$ is then given by
\begin{equation}\label{ext_Ln}
\frac{1}{2}p_u^2+m^2\tilde \alpha L^{(n)}+m^2\tilde \beta ^{(n)},
\end{equation}
with first integral
$$
\left(p_u+m\tilde \gamma X_{L^{(n)}}\right)^m(G_n),
$$
where $\tilde \beta ^{(n)}=\tilde{L}_0^{(n)}\tilde \gamma  ^2$ and $\tilde \alpha$, $\tilde \beta$, $\tilde \gamma$  are given by Table 1. The definition of $L^{(n)}$ implies
$$
\frac{1}{n^4}X_L^2(G_n)=-2\left(\frac{\tilde{c}}{n^2}L+\tilde{L}_0^{(n)}\right)G_n.
$$
Hence, after setting $\tilde \beta =\tilde{L}_0\tilde \gamma  ^2$, one has
$$
\tilde \beta ^{(n)}=\tilde{L}_0^{(n)}\tilde \gamma  ^2=\frac{1}{n^2}\tilde{L}_0\tilde \gamma  ^2=\frac{1}{n^2}\tilde \beta 
$$
and the extension (\ref{ext_Ln}) becomes
$$
\frac{1}{2}p_u^2+\frac{m^2}{n^2}\tilde \alpha L+\frac{m^2}{n^2}\tilde \beta 
$$
which coincides with the $(m,n)$-extension of $L$.
\end{proof}

\begin{rmk}\rm  The Proof of Lemma \ref{Uno} shows that it is not restrictive to assume in Definition \ref{def2} that $\tilde \alpha$, $\tilde \beta$, $\tilde \gamma$ are those given in Table 1.
\end{rmk}
Consequently, the search for extensions with rational parameter reduces to the search for solutions of  (\ref{CGn}). Theorem \ref{Due} shows how to construct iteratively solutions $G_n$ of
(\ref{CGn}) starting from a known solution $G$ of (\ref{CG}).
\begin{teo}\label{Due}
Let $G$ be a function on $Q$ satisfying
$$
X_L^2(G)=\Lambda G
$$
with $X_L(\Lambda)=0$, then the recursion
\begin{equation}\label{rec}
G_1=G, \qquad G_{n+1}=X_L(G)\,G_n+\frac{1}{n}G\,X_L(G_n), 
\end{equation}
satisfies, for any $n\in \mathbb N-\{0\},$
$$
X_L^2(G_n)=n^2\Lambda G_n.
$$
\end{teo}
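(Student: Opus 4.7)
The natural approach is induction on $n$, exploiting the fact that $X_L(\Lambda)=0$ turns $\Lambda$ into a scalar that commutes with every application of $X_L$.

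The base case is immediate: $G_1=G$ satisfies $X_L^2(G_1)=\Lambda G_1=1^2\Lambda G_1$ by hypothesis. For the inductive step, assume $X_L^2(G_n)=n^2\Lambda G_n$ and apply $X_L$ to the recursion
$$
G_{n+1}=X_L(G)\,G_n+\frac{1}{n}G\,X_L(G_n).
$$
Since $X_L^2(G)=\Lambda G$, the Leibniz rule and the inductive hypothesis give
$$
X_L(G_{n+1})=\Lambda G\,G_n+X_L(G)X_L(G_n)+\frac{1}{n}X_L(G)X_L(G_n)+\frac{1}{n}G\cdot n^2\Lambda G_n,
$$
which I expect to collapse to
$$
X_L(G_{n+1})=(n+1)\Lambda G\,G_n+\frac{n+1}{n}X_L(G)X_L(G_n),
$$
i.e.\ a neat factor of $(n+1)$ appears in front of the same structural combination.

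Next I apply $X_L$ once more. Using $X_L(\Lambda)=0$ to pull $\Lambda$ past differentiations, together with $X_L^2(G)=\Lambda G$ and $X_L^2(G_n)=n^2\Lambda G_n$, the computation should yield
$$
X_L^2(G_{n+1})=(n+1)^2\Lambda X_L(G)\,G_n+\frac{(n+1)^2}{n}\Lambda G\,X_L(G_n),
$$
after combining coefficients like $(n+1)+n(n+1)=(n+1)^2$ and $n(n+1)+(n+1)=(n+1)^2$. Factoring out $(n+1)^2\Lambda$ reproduces exactly $G_{n+1}$ in the recursion, closing the induction.

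The only delicate point is bookkeeping: checking that the cross terms produced by the Leibniz rule reassemble into the original combination $X_L(G)G_n+\tfrac{1}{n}G\,X_L(G_n)$, both at the first application of $X_L$ and again at the second. There is no geometric obstacle — the result holds for any derivation $X_L$ acting on functions with a scalar $\Lambda$ satisfying $X_L(\Lambda)=0$ — so the work is purely the algebraic verification that the coefficient $1/n$ in the recursion is exactly what is required to make the quadratic identity $(n+1)^2$ appear.
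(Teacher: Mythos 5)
Your proposal is correct and follows essentially the same route as the paper: induction on $n$ using the Leibniz rule together with $X_L(\Lambda)=0$, with all your claimed intermediate identities (in particular $X_L(G_{n+1})=(n+1)\Lambda G\,G_n+\frac{n+1}{n}X_L(G)X_L(G_n)$ and the coefficient recombinations into $(n+1)^2$) checking out. The only cosmetic difference is that you apply $X_L$ twice in succession while the paper expands $X_L^2(AB)=X_L^2(A)B+2X_L(A)X_L(B)+AX_L^2(B)$ in one step; the computations are equivalent.
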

\begin{proof}
By induction on $n$.
For $n=1,$ one has $G_2=2G\,X_L(G)$ and the relation $X_L^2(G_2)=4\Lambda G_2$ is straightforward. Then,  let us assume that $G_n$ satisfies $X_L^2(G_n)=n^2\Lambda G_n$. Recalling that for two functions $A$ and $B$ the formula
$$
X_L^2(AB)=X_L^2(A)B+2X_L(A)X_L(B)+AX_L^2(B)
$$
holds, we get 
\begin{eqnarray*}
X_L^2(G_{n+1}) &=& X_L^2\left[X_L(G)\,G_n+\frac{1}{n}G\,X_L(G_n)\right]\\
&=& X_L^3(G)G_n+2X_L^2(G)X_L(G_n)+X_L(G)X_L^2(G_n)+\\
&& \frac{1}{n}\Big[X_L^2(G)X_L(G_n)+2X_L(G)X_L^2(G_n)+GX_L^3(G_n)\Big]\\
&=& \Lambda X_L(G)G_n+2\Lambda GX_L(G_n)+n^2\Lambda X_L(G)G_n+\\
&&\frac{1}{n}\Big[\Lambda GX_L(G_n)+2n^2\Lambda X_L(G)G_n+n^2\Lambda GX_L(G_n)\Big]\\
&=&(1+2n+n^2)\Lambda X_L(G)G_n+\frac{1+2n+n^2}{n}\Lambda GX_L(G_n)\\
&=&(n+1)^2\Lambda G_{n+1}.
\end{eqnarray*}
\end{proof}

\begin{cor}
A Hamiltonian $L$ admits a $m$-extension $H_m$ if and only if it admits a $(m,n)$-extension $H_{m,n}$ for any positive rational $m/n$.
\end{cor}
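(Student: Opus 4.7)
The plan is to show that the corollary is essentially an immediate consequence of Lemma~\ref{Uno} and Theorem~\ref{Due}, once one recognizes that the function $\Lambda$ in Theorem~\ref{Due} is precisely $-2(\tilde c L+\tilde L_0)$, which satisfies the hypothesis $X_L(\Lambda)=0$.

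The reverse implication is trivial: if $L$ admits an $(m,n)$-extension for every positive rational $m/n$, then in particular it admits an $(m,1)$-extension, and comparing (\ref{mn_ext})--(\ref{mn_int}) with (\ref{m_ext})--(\ref{m_int}) one sees that a $(m,1)$-extension is exactly an $m$-extension.

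For the forward direction, I would argue as follows. Suppose $L$ admits an $m$-extension generated by some function $G$. By the characterization leading to equation (\ref{CG}), the function $G$ satisfies $X_L^2(G)=-2(\tilde c L+\tilde L_0)G$. Set $\Lambda:=-2(\tilde c L+\tilde L_0)$. Since $X_L(L)=\{L,L\}=0$ and constants are annihilated by $X_L$, we have $X_L(\Lambda)=0$, so the hypothesis of Theorem~\ref{Due} is met. Applying that theorem yields, for every positive integer $n$, a function $G_n$ (defined by the recursion (\ref{rec})) satisfying
$$
X_L^2(G_n)=n^2\Lambda G_n=-2n^2(\tilde c L+\tilde L_0)G_n,
$$
which is exactly equation (\ref{CGn}). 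By Lemma~\ref{Uno}, $L$ then admits a $(m,n)$-extension generated by $G_n$, for any positive integer $n$ (and the integer $m$ is arbitrary throughout, since the equation (\ref{CGn}) imposes no constraint on $m$). This produces the desired $(m,n)$-extension for every positive rational $m/n$.

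There is essentially no obstacle: all the genuine work has already been done in Lemma~\ref{Uno} (which converts the extension problem for $L/n^2$ into the condition (\ref{CGn}) on $G_n$) and in Theorem~\ref{Due} (which produces the iterative construction). The only point that deserves explicit mention is the verification that $X_L(\Lambda)=0$, which is automatic from $\Lambda$ being an affine function of $L$ with constant coefficients.
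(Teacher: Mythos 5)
Your proof is correct and follows essentially the same route as the paper's: take the $G$ satisfying (\ref{CG}), apply Theorem~\ref{Due} with $\Lambda=-2(\tilde cL+\tilde L_0)$ to obtain $G_n$ satisfying (\ref{CGn}), and invoke Lemma~\ref{Uno}. Your explicit verification that $X_L(\Lambda)=0$ and your identification of the $(m,1)$-extension with the $m$-extension for the converse are welcome details that the paper leaves implicit.
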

\begin{proof}
If $L$ admits $m$-extensions, then there exists $G$  satisfying (\ref{CG}).
By Theorem \ref{Due} with
\begin{equation}\label{Lamb}
\Lambda=-2(\tilde{c}L+\tilde{L}_0),
\end{equation}
 we can construct $G_n$ verifying the condition (\ref{CGn}) for any $n$.  Hence, we can construct the $(m,n)$-extension (\ref{mn_ext}). The converse is straightforward. 
\end{proof}

\begin{prop}
The closed form for $G_n$ satisfying the recursion (\ref{rec}) is
\begin{equation}\label{Gn_alt}
G_n=\sum_{k=0}^{\left[\frac{n-1}{2}\right]}
\binom{n}{2k+1}\Lambda^k G^{2k+1}(X_L G)^{n-2k-1},
\end{equation} 
where $[\cdot]$ denotes the integer part.
\end{prop}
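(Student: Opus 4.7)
The plan is to proceed by induction on $n$, with the recursion \eqref{rec} as the engine. The base case $n=1$ is immediate: the summation range collapses to the single term $k=0$, giving $\binom{1}{1}\Lambda^{0}G(X_L G)^{0}=G=G_1$.

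For the inductive step, I would assume \eqref{Gn_alt} at level $n$ and compute $X_L(G_n)$ by the Leibniz rule. The crucial identities are $X_L(\Lambda)=0$ and $X_L(X_L G)=\Lambda G$ (from the hypothesis of Theorem \ref{Due}): differentiating an $X_L G$ factor replaces it by $\Lambda G$, raising the power of $G$ by two and the power of $\Lambda$ by one, whereas differentiating a $G$ factor converts it to $X_L G$. Consequently $X_L(G_n)$ splits into two sums. Substituting into $G_{n+1}=(X_L G)G_n+\tfrac{1}{n}G\,X_L(G_n)$ and reindexing — taking $j=k$ in the contributions that retain the shape $\Lambda^{k}G^{2k+1}(X_L G)^{n-2k}$ and $j=k+1$ in those shifted by the Leibniz action on the $X_L G$ factors — collects everything into the single monomial form $\Lambda^{j}G^{2j+1}(X_L G)^{n-2j}$, with combined coefficient
$$
\frac{n+2j+1}{n}\binom{n}{2j+1}+\frac{n-2j+1}{n}\binom{n}{2j-1}.
$$

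The heart of the proof — and the main obstacle — is to show this coefficient equals $\binom{n+1}{2j+1}$. Applying Pascal's rule $\binom{n+1}{2j+1}=\binom{n}{2j+1}+\binom{n}{2j}$ reduces the goal to the weighted identity
$$
\frac{2j+1}{n}\binom{n}{2j+1}+\frac{n-2j+1}{n}\binom{n}{2j-1}=\binom{n}{2j},
$$
which is verified by expanding in factorials: both left-hand summands share the factor $(n-1)!/\bigl((2j)!(n-2j)!\bigr)$, and the residual weights $(n-2j)$ and $2j$ add up to $n$. A brief check of the boundary indices seals the argument — the second sum contributes nothing at $j=0$ since $\binom{n}{-1}=0$, while for even $n$ the top index $j=n/2$ of $G_{n+1}$ receives its required value $\binom{n+1}{n+1}=1$ entirely from the second sum via $\tfrac{1}{n}\binom{n}{n-1}=1$ — so the summation range matches $\sum_{j=0}^{[n/2]}$, completing the induction.
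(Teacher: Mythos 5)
Your proposal is correct and follows essentially the same route as the paper: induction driven by the recursion (\ref{rec}), the Leibniz expansion of $X_L(G_n)$ using $X_L^2(G)=\Lambda G$ and $X_L(\Lambda)=0$, reindexing, and the very same binomial identity $\tfrac{n+2j+1}{n}\binom{n}{2j+1}+\tfrac{n-2j+1}{n}\binom{n}{2j-1}=\binom{n+1}{2j+1}$, which the paper invokes without proof and you additionally verify; the paper merely organizes the step by parity ($n=2i$ and $n=2i+1$) where you handle the range mismatch by a boundary check. No gaps.
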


\begin{proof}
Since
\begin{eqnarray*}
X_L(G^{2k+1}(X_L G)^{n-2k-1})&=&(2k+1) G^{2k}(X_L G)^{n-2k} \\
&&+(n-2k-1)\Lambda G^{2k+2}(X_L G)^{n-2k-2},
\end{eqnarray*}
and using the identity
$$
\tfrac{n+2k+1}{n} \textstyle{\binom{n}{2k+1}}+ \tfrac{n-2k+1}{n}
 \textstyle{\binom{n}{2k-1}}=\textstyle{\binom{n+1}{2k+1}},
$$
by applying the recursion (\ref{rec}) we have that for $n=2i$
\begin{eqnarray*}
G_{2i+1}
&=& X_L(G)\,G_{2i}+\frac{1}{2i}G\,X_L(G_{2i})\\
&=& \sum_{k=0}^{i-1}\binom{2i}{2k+1}
\left(1+\frac{2k+1}{2i}\right) \Lambda^k G^{2k+1}(X_L G)^{2i-2k}
\\ & & + \sum_{k=0}^{i-1}\frac{2i-2k-1}{2i}\binom{2i}{2k+1}
 \Lambda^{k+1} G^{2k+3}(X_L G)^{2i-2k-2}\\
&=& \sum_{h=0}^{i-1}\binom{2i}{2h+1}
\frac{2i+2h+1}{2i} \Lambda^h G^{2h+1}(X_L G)^{2i-2h}
\\ & & + \sum_{h=1}^{i}\frac{2i-2h+1}{n}\binom{2i}{2h-1}
 \Lambda^h G^{2h+1}(X_L G)^{2i-2h}
 \\
 &=& (2i+1)G (X_L G)^{2i}+ \Lambda^{i}G^{2i+1} 
 \\ & &+ \sum_{h=1}^{i-1}\left[ \tfrac{2i+2h+1}{2i} \textstyle{\binom{2i}{2h+1}}+ \tfrac{2i-2h+1}{2i}
 \textstyle{\binom{2i}{2h-1}}\right]\Lambda^h G^{2h+1}(X_L G)^{2i-2h}
 \\
 &=& \sum_{k=0}^{i}
\binom{2i+1}{2k+1}\Lambda^k G^{2k+1}(X_L G)^{2i-2k},
\end{eqnarray*}
that is (\ref{Gn_alt}) with $n=2i+1$. The case $n=2i+1$ is analogous:
\begin{eqnarray*}
G_{2i+2}
&=& X_L(G)\,G_{2i+1}+\frac{1}{2i+1}G\,X_L(G_{2i+1})\\
&=& \sum_{k=0}^{i}\binom{2i+1}{2k+1}
\left(1+\frac{2k+1}{2i+1}\right) \Lambda^k G^{2k+1}(X_L G)^{2i+1-2k}
\\ & & + \sum_{k=0}^{i}\frac{2i-2k}{2i+1}\binom{2i+1}{2k+1}
 \Lambda^{k+1} G^{2k+3}(X_L G)^{2i-2k-1}\\
&=& \sum_{h=0}^{i}\binom{2i+1}{2h+1}
\frac{2i+2h+2}{2i+1} \Lambda^h G^{2h+1}(X_L G)^{2i+1-2h}
\\ & & + \sum_{h=1}^{i}\frac{2i-2h+1}{n}\binom{2i}{2h-1}
 \Lambda^h G^{2h+1}(X_L G)^{2i-2h}\\
\hphantom{G_{2i+2}} &=& (2i+2)G (X_L G)^{2i+1} \\ & &+ \sum_{h=1}^{i}\left[ \tfrac{2i+2h+2}{2i+1} \textstyle{\binom{2i+1}{2h+1}}+ \tfrac{2i-2h+2}{2i+1}\textstyle{\binom{2i+1}{2h-1}}\right]\Lambda^h G^{2h+1}(X_L G)^{2i+1-2h}
 \\
 &=& \sum_{k=0}^{i}
\binom{2i+2}{2k+1}\Lambda^k G^{2k+1}(X_L G)^{2i+1-2k},
\end{eqnarray*}
that is (\ref{Gn_alt}) with $n=2i+2$.
\end{proof}
It follows that, if $L$ is quadratic in the momenta, then  $G_n$ is  polynomial in the momenta of degree at most 
$$
n\left(1+deg(G)\right)-1.
$$

From \cite{CDR}, we can  derive the explicit expression of the first integrals $K_{m,n}$ of any $(m,n)$-extension

\begin{prop}
Let $G_n$ be a recursion with $G_1=G$ satisfying (\ref{CG}), that is with $\Lambda$ given by (\ref{Lamb}). We have 

\begin{equation}\label{EEcal}
K_{m,n}=P_{m,n}G_n+D_{m,n}X_{L}(G_n),
\end{equation}
with
$$
P_{m,n}=\sum_{k=0}^{[m/2]}\binom{m}{2k}\, \left(\frac mn \tilde \gamma \right)^{2k}p_u^{m-2k}\Lambda^k,
$$
$$
D_{m,n}=\frac 1{n}\sum_{k=0}^{[(m-1)/2]}\binom{m}{2k+1}\, \left(\frac mn \tilde \gamma \right)^{2k+1}p_u^{m-2k-1}\Lambda^k, \quad m>1,
$$
where $[\cdot]$ denotes the integer part and $D_{1,n}=\frac m{n^2}\tilde \gamma$.
\end{prop}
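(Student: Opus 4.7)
The plan is to expand $K_{m,n} = U_{m,n}^m(G_n)$, where $U_{m,n}:=p_u+\frac{m}{n^2}\tilde\gamma X_L$, by means of the binomial theorem, and then to use the identity $X_L^2(G_n)=n^2\Lambda G_n$ furnished by Theorem \ref{Due} to collapse higher iterated actions of $X_L$ on $G_n$ down to $G_n$ and $X_L(G_n)$ only.

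The first step is to observe that, on functions on $T\times Q$, the two operators ``multiplication by $p_u$'' and ``$\frac{m}{n^2}\tilde\gamma X_L$'' commute. Indeed, since $L$ depends neither on $u$ nor on $p_u$, we have $X_L(p_u)=\{p_u,L\}=0$ and $X_L(\tilde\gamma(u))=\{\tilde\gamma,L\}=0$; hence for every $f$ one has $p_u\bigl(\tilde\gamma X_L(f)\bigr)=\tilde\gamma X_L(p_u f)$. Consequently the binomial theorem applies verbatim and gives
$$
U_{m,n}^m(G_n)=\sum_{k=0}^{m}\binom{m}{k}\,p_u^{m-k}\left(\frac{m}{n^2}\tilde\gamma\right)^{k}X_L^{k}(G_n).
$$

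Next, Theorem \ref{Due} says that $X_L^2(G_n)=n^2\Lambda G_n$; since $\Lambda=-2(\tilde cL+\tilde L_0)$ depends only on $L$, we have $X_L(\Lambda)=0$, so a trivial induction yields $X_L^{2j}(G_n)=(n^2\Lambda)^{j}G_n$ and $X_L^{2j+1}(G_n)=(n^2\Lambda)^{j}X_L(G_n)$. Splitting the sum above into even ($k=2j$) and odd ($k=2j+1$) indices therefore produces one piece proportional to $G_n$ and one proportional to $X_L(G_n)$, which is the structure of (\ref{EEcal}).

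The last step is purely bookkeeping on the powers of $m$ and $n$: for the even part $\left(\frac{m}{n^2}\right)^{2j}n^{2j}=\left(\frac{m}{n}\right)^{2j}$, reproducing $P_{m,n}$; for the odd part $\left(\frac{m}{n^2}\right)^{2j+1}n^{2j}=\frac{1}{n}\left(\frac{m}{n}\right)^{2j+1}$, reproducing the global prefactor $1/n$ in $D_{m,n}$ (with the degenerate case $m=1$, where only $j=0$ survives, giving $D_{1,n}=\frac{m}{n^2}\tilde\gamma$). No real obstacle is anticipated; the only delicate point is the commutativity of $p_u\cdot$ with $\tilde\gamma X_L$, which relies essentially on $L$ being independent of the extension variables $u$ and $p_u$.
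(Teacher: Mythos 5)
Your proof is correct, but it takes a different route from the paper's. The paper does not recompute anything: it observes (via Lemma \ref{Uno}) that $K_{m,n}$ is the first integral of an $m$-extension of $L/n^2$, and then imports Theorem 3 of \cite{CDR} wholesale, performing the substitutions $\gamma \rightarrow \tilde\gamma\, m/n^2$, $-2m(cL+L_0)\rightarrow n^2\Lambda$ and $G\rightarrow G_n$ to read off $P_{m,n}$ and $D_{m,n}$. You instead give a self-contained derivation: you expand $\bigl(p_u+\tfrac{m}{n^2}\tilde\gamma X_L\bigr)^m(G_n)$ by the binomial theorem, justified by the commutativity of multiplication by $p_u$ with $\tilde\gamma X_L$ (which holds precisely because $X_L(p_u)=X_L(\tilde\gamma)=0$, and which also gives $(\tilde\gamma X_L)^k=\tilde\gamma^k X_L^k$), and then collapse $X_L^{2j}(G_n)=(n^2\Lambda)^jG_n$, $X_L^{2j+1}(G_n)=(n^2\Lambda)^jX_L(G_n)$ using $X_L(\Lambda)=0$; the power count $\bigl(\tfrac{m}{n^2}\bigr)^{2j}n^{2j}=\bigl(\tfrac{m}{n}\bigr)^{2j}$ and $\bigl(\tfrac{m}{n^2}\bigr)^{2j+1}n^{2j}=\tfrac1n\bigl(\tfrac{m}{n}\bigr)^{2j+1}$ reproduces $P_{m,n}$ and $D_{m,n}$ exactly, including the case $m=1$. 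In effect you reprove the cited external theorem in the generality needed here. What the paper's route buys is brevity and consistency with its earlier framework; what yours buys is independence from \cite{CDR} and an explicit display of where each hypothesis enters (it also makes transparent why the upper limits are $[m/2]$ and $[(m-1)/2]$, the latter being the very point the paper flags as misprinted in \cite{CDR}). Both arguments are sound.
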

 
\footnote{
We remark that in Theorem 3 of \cite{CDR} the upper limit of the sum in $D_m$ is misprinted as $[m/2]-1$ instead of $[(m-1)/2]$.
}

\begin{proof}
Since the first integral $K_{m,n}$ of a $(m,n)$ extension of $L$ is the first integral of a $m$ extension of $L/n^2$, by Lemma \ref{Uno}  it follows that Theorem 3 of \cite{CDR} can be easily adapted, once we set  $\gamma \rightarrow \tilde \gamma m/n^2$, $-2m(cL+L_0) \rightarrow n^2\Lambda$ and $G\rightarrow G_n$. The result is straightforward.
\end{proof}

It follows that the degree of the first integral $K_{m,n}$, for $L$ quadratic in the momenta,  is 
$$
m+deg(G_n)\leq m+n\left(1+deg(G)\right)-1.
$$

When $m$ and $n$ are not reciprocally prime, one can expect that the polynomial $K_{m,n}$ factorizes into a number of factors, some of them again in the form $K_{r,s}$, where $m=a_0r$, $n=a_0s$, $a_0,r,s \in \mathbb N-\{0\}$. Indeed, the computation of several examples suggests that  $K_{r,s}$ is then  a divisor of $K_{m,n}$. 

\section{Examples of $(m,n)$-extensions}
Once a $m$-extension of a  Hamiltonian $L$ is known, it is straightforward to build any $(m,n)$-extension of $L$, as the following examples show.
\subsection{Two uncoupled oscillators}
Let us consider the two uncoupled harmonic oscillators described by the Hamiltonian (\ref{ex2}), extension of 
$$
L=\frac{1}{2}p_x^2+\omega^2x^2,
$$
for integer values of the parameter $\lambda$.
In this case the vector field $X_L$ is given by
$$
X_L=p_x\frac{\partial}{\partial x}-2\omega^2x\frac{\partial}{\partial p_x}.
$$
We can easily find a $G(x)$ not depending on the momenta  satisfying the condition (\ref{CG}),
with $\tilde c=0$, $\tilde L_0=\omega^2$; indeed (\ref{CG}) reduces to
$$
p_x^2(G'')-2\omega^2(xG'-G)=0.
$$
Thus, $G=x$ up to an inessential multiplicative constant and $X_L(G)=p_x$.
Then,  by setting $A=1$, we can construct the $(m,n)$-extension of $L$
$$
H_{m,n}=\frac 12\left( p_u^2+\left(\frac mn\right)^2p_x^2\right)+\omega^2\left(\frac mn \right)^2\left(x^2+u^2\right),
$$
for any $m,n$ positive integers, which is equivalent (up a rescaling and a constant factor) to (\ref{osc0})
with $\lambda=m/n$.
The first terms in the recursion (\ref{rec}) are
\begin{eqnarray*}
G_1&=&x,\\
G_2&=&2x\;p_x,\\
G_3&=&3x\;p_x^2-2\omega^2x^3,\\
G_4&=&4x\;p_x^3-8\omega^2x^3\;p_x,\\
G_5&=&5x\;p_x^4-20\omega^2x^3\;p_x^2+4\omega^4x^5.
\end{eqnarray*}
Alternatively, formula (\ref{Gn_alt}) becomes in this case
$$
G_n=\sum_{k=0}^{\left[\frac{n-1}{2}\right]}
\binom{n}{2k+1}(-2\omega^2)^k x^{2k+1}p_x^{n-2k-1}.
$$
For some values of $(m,n)$ we get the following first integrals
\begin{eqnarray*}
K_{1,1}&=& x\,p_u-u\,p_x,\\
K_{1,2}&=& 2x\;p_xp_u-u\left(\frac{1}{2}p_x^2-\omega^2 x^2\right),\\
K_{2,2}&=& 2(x\,p_u-u\,p_x)(p_x p_u+2\omega^2 xu),\\
K_{3,2}&=& 2x\;p_xp_u^3-\frac 92 u\;p_x^2p_u^2+9\omega^2 x^2u \;p_u^2 -27\omega^2xu^2\,p_xp_u,
\\
& & +\frac{27}4 \omega^2u^3\;p_x^2 -\frac{27}{2}\omega^4x^2u^3.
\end{eqnarray*}
We remark that $K_{2,2}$ is a multiple of $K_{1,1}$. This factorization is related to the fact that $(1,1)$ and $(2,2)$ represent the same rational number.

\subsection{Calogero-type systems}
We consider now the $(m,n)$-extensions of the Hamiltonian (\ref{calL}).
In this case the vector field $X_L$ is given by
$$
X_L=p_\phi\frac{\partial}{\partial \phi}+ \frac{2a\cos \phi}{\sin^3\phi} \frac{\partial}{\partial p_\phi}.
$$
We can easily find a $G(\phi)$ not depending on the momenta  satisfying the condition (\ref{CG}),
with $\tilde c=1$, $\tilde L_0=0$; indeed (\ref{CG}) splits into
$$
(G''+ G)=0, \qquad G'+ \frac{\sin \phi}{\cos\phi} G=0.
$$
Thus, up to an inessential multiplicative constant, a solution is $G=\cos\phi$  and $X_L(G)=p_\phi \sin \phi$.
Then, by (\ref{rec}) we can construct the sequence of $G_n$, whose first terms are
\begin{eqnarray*}
G_1&=&\cos\phi,\\
G_2&=&-\sin 2\phi\; p_\phi,\\
G_3&=&-\cos 3\phi\; p_\phi^2-\frac{2a\cos^3 \phi}{\sin^2\phi},\\
G_4&=&\sin 4\phi\; p_\phi^3-\frac{8a\cos^3 \phi}{\sin\phi}p_\phi,\\
G_5&=&\cos 5\phi\; p_\phi^4+\frac{4a(6\cos^2\phi-5)\cos^3\phi}{\sin^2\phi}\,p_\phi^2+\frac{4a^2\cos^5 \phi}{\sin^4\phi}.
\end{eqnarray*}
We can construct the $(m,n)$-extension of $L$
$$
H_{m,n}=\frac 12 p_u^2 +\frac{m^2}{n^2S^2_\kappa (u)} \left(\frac 12p_\phi^2+\frac {a}{\sin^2(\phi)}\right),
$$
for any $m,n$ positive integers. According to the sign of the parameter $\kappa$, the extended Hamiltonian is defined on the cotangent bundle of a sphere ($\kappa>0$), of a plane ($\kappa=0$) or of a pseudo-sphere ($\kappa<0$).
Moreover, for $\kappa=0$ the extension 
coincides with (\ref{ex1}) that, after the rescaling $\phi= \lambda \psi$, becomes
the Hamiltonian (\ref{cal0})
$$
H=\frac{1}{2}p_u^2+\frac{1}{u^2}\left(\frac{1}{2}p_\psi^2+\frac{a}{\sin^2(
\lambda\psi)}\right),
$$ 
for any rational $\lambda=m/n$. This Hamiltonian generalizes the Calogero three particle chain without harmonic term (obtained for $\lambda=3$, see \cite{CDR0}, also known   as Jacobi system) and it was the starting point of our work in \cite{Pol}. This is a particular case of the TTW system \cite{TTW}.
 
As an example, we compute a first integral of the extension (\ref{ex1})  for $(m,n)=(2,3)$ i.e., $\lambda=2/3$, 
\begin{eqnarray*}
K_{2,3}\! \!  \!&=& \! \!\!  -\cos3\phi\; p_u^2p_\phi^2 +\frac {4}{3u}\sin\phi(4\cos^2\phi-1)\; p_u p_\phi^3+\frac{4\cos3\phi}{9u^2}p_\phi^4 
 -\frac{2a\cos^3\phi}{\sin^2\phi} p_u^2 
 \\
 & &
  + \frac{8a\cos^2\phi}{u\sin\phi}\, p_up_\phi  
 +
\frac{8a(5\cos^2\phi-3)\cos\phi} {9u^2\sin^2\phi}\, p_\phi^2 +\frac{16a^2\cos^3\phi}{9u^2\sin^4\phi}.
\end{eqnarray*}
%
%
%
%
%
%
%
%

\subsection{Three-sphere}
Any example of $m$-extension taken from \cite{Pol,CDRfi,CDR,CDR1} can be easily transformed into a $(m,n)$-extension. For example, let us consider from \cite{CDRfi} the $m$-extension of the geodesic Hamiltonian 
$$
L=\frac{1}{2}\left(p^2_\eta+\frac{p^2_{\xi_1}}{\sin^2\eta}+\frac{p^2_{\xi_2}}{\cos^2\eta}\right)
$$ 
on the three-sphere $\mathbb S^3$,  with coordinates $(\eta, \xi_1, \xi_2)$ where $0<\eta<\pi/2$, $0\leq \xi_i <2\pi$ 
and the parametrization in $\mathbb R^4$  given by $$x=\cos\xi_1 \sin\eta,\ y=\sin\xi_1 \sin\eta,\ z=\cos\xi_2 \cos\eta,\ t=\sin \xi_2 \cos\eta.$$ 
Equation (\ref{CG}) admits a complete solution (i.e., a solution depending on the maximal number of parameters $a_i$) if and only if $mc=\tilde{c}$ equals the curvature $K=1$ of the three-sphere \cite{CDRfi}.The complete solution is
\begin{eqnarray*}
G&=&(a_3\sin(\xi_1)+a_4\cos(\xi_1))\sin(\eta)+(a_1\sin(\xi_2)+a_2\cos(\xi_2))\cos(\eta)\\
&=& a_4x+a_3y+a_2z+a_1t,
\end{eqnarray*}
with $a_i$ constants. 
We look for compatible potentials $V$, i.e., functions that can be added to $L$ in such a way  that  $G$ satisfies (\ref{CG})  for the natural Hamiltonian $L+V$.
It is easier to find compatible $V$  when some of the parameters $a_i$  in $G$ are assigned.
For example, if 
$a_2=a_3=a_4=0$, a compatible potential is
$$
V=\frac{\sin \xi_1}{\cos \xi_2\cos \eta \sin \eta}.
$$
With this choice of $(a_i)$, the first three $G_n$ obtained from $G=\sin \xi_2\cos \eta$ are 
\begin{eqnarray*}
G_1&=&\sin \xi_2 \cos \eta,\\
G_2&=&-\sin 2\eta \sin ^2\xi_2\;p_\eta+2\sin \xi_2\cos\xi_2\;p_{\xi_2},\\
G_3&=&-\cos 3\eta \sin ^3 \xi_2\;p_\eta^2-6 \sin \eta\sin ^2 \xi_2 \cos \xi_2\; p_\eta p_{\xi_2} -\frac{\cos^3 \eta}{\sin^2 \eta}\sin^3 \xi_2\; p_{\xi_1}^2
\\
&& +
\frac{\sin \xi_2(3\cos^2 \xi_2-\cos^2 \eta
\sin^2 \xi_2)}{\cos \eta}\,p_{\xi_2}^2
%
 -2\frac{\sin^3\xi_2\sin\xi_1\cos^2\eta}{\cos\xi_2\sin\eta},
\end{eqnarray*}
the $(m,n)$-extension of $L+V$ is 
$$
H_{m,n}=\frac 12 p_u^2+\left(\frac {m}{n}\right)^2\frac{L+V} {S^2_\kappa (u)},
$$
and, for example,  a first integral of $H_{1,2}$ is
\begin{eqnarray*}
K_{1,2}&=&\frac 1{T_\kappa( u)}\left(-\frac 12 \cos 2\eta\sin^2 \xi_2\;p_\eta^2 -\frac{\sin \eta \cos \xi_2 \sin \xi_2}{\cos \eta}\,p_\eta p_{\xi_2}\right.\\
&-&\left. \frac{\cos^2 \eta \sin^2\xi_2}{2\sin^2\eta}\,p_{\xi_1}^2 +\frac{\cos^2 \xi_2-\cos^2\eta \sin^2 \xi_2}{2\cos^2 \eta}\,p_{\xi_2}^2- \frac{\sin^2\xi_2\sin \xi_1 \cos \eta}{\sin \eta \cos \xi_2}\right)\\
&-&2\sin 2 \eta \sin^2\xi_2 \; p_u p_\eta+2\sin\xi_2\cos \xi_2\;p_up_{\xi_2}.
\end{eqnarray*}

\section{Conclusions}

In this article we show that the technique of extensions can be modified in order to be applied successfully also to a class of Hamiltonian systems (including some known superintegrable ones) 
depending on rational values of a parameter.  
Many properties of the first integrals obtained, for instance their factorization,  need a deeper analysis and the possibility of the use of the technique to build new extended systems with rational parameters is still unexplored. This short exposition is certainly not a complete theory of extensions with rational parameters, but it represents a solid ground for building such a theory.
 
\section{Appendix}

The {\it trigonometric tagged functions} 
$$
S_\kappa(x)=\left\{\begin{array}{ll}
\frac{\sin\sqrt{\kappa}x}{\sqrt{\kappa}} & \kappa>0 \\
x & \kappa=0 \\
\frac{\sinh\sqrt{|\kappa|}x}{\sqrt{|\kappa|}} & \kappa<0
\end{array}\right.
\qquad
C_\kappa(x)=\left\{\begin{array}{ll}
\cos\sqrt{\kappa}x & \kappa>0 \\
1 & \kappa=0 \\
\cosh\sqrt{|\kappa|}x & \kappa<0
\end{array}\right.,
$$
$$
T_\kappa(x)=\frac {S_\kappa(x)}{C_\kappa(x)},
$$
are employed, explicitly or not, by several scholars  (see  \cite{Tag}, \cite{Chern}) and appear in several branches of mathematics.

The trigonometric tagged functions satisfy a number of properties  analogous to those of  ordinary trigonometric functions.  Their main advantage is to unify trigonometric and hyperbolic functions in a homogeneous way. From the definition it follows that $S_\kappa(x)$ is an odd function while $C_\kappa(x)$ is an even function but, if $\kappa\le0$ and $x\in\mathbb{R}$ they are no more periodic. The basic properties are
$$
C_\kappa^2(x)+\kappa S_\kappa^2(x)=1,
$$
\begin{eqnarray*}
S_\kappa(x\pm y)&=&S_\kappa(x)C_\kappa(y)\pm C_\kappa(x)S_\kappa(y),\\
C_\kappa(x\pm y)&=&C_\kappa(x)C_\kappa(y)\mp\kappa S_\kappa(x)S_\kappa(y),
\end{eqnarray*}
from these the duplication formulas can be obtained
\begin{eqnarray*}
S_\kappa(2x)&=&2S_\kappa(x)C_\kappa(x),\\
C_\kappa(2x)&=&C_\kappa^2(x)-\kappa S_\kappa^2(x)=
\left\{\begin{array}{l}
2C_\kappa^2(x)-1,\\
1-2\kappa S_\kappa^2(x).
\end{array}\right.
\end{eqnarray*}
The bisection formula
$$
C_\kappa^2(x)=\frac{1+C_\kappa(2x)}{2}
$$
is always true, while one has to set $\kappa\neq0$  in order to obtain
$$
S_\kappa^2(x)=\frac{1-C_\kappa(2x)}{2\kappa}.
$$
The functions $S_\kappa(x)$ and $C_\kappa(x)$ are related through differentiation:
$$
\frac{d}{dx}S_\kappa(x)=C_\kappa(x) \qquad \frac{d}{dx}C_\kappa(x)=-\kappa S_\kappa(x).
$$
Hence, the linear combinations of $S_\kappa(x)$ and $C_\kappa(x)$ provide the general solution of the differential equation $${F}''+\kappa F=0.$$

\subsection*{Acknowledgement}
 This work has been partially (CC) supported by PRIN 2010/2011 Research project ``Teorie geometriche e analitiche dei sistemi Hamiltoniani in dimensioni finite e infinite" 
grant no 2010JJ4KPA\_007.

\end{document}